\documentclass[aps,final,twocolumn]{revtex4-1}
\usepackage{amsmath,amssymb,amsfonts,amsthm,color,graphicx}

\newcommand{\commentout}[1]{}
\makeatother

\definecolor{nblack}{rgb}{0,0,0}
\definecolor{nblue}{rgb}{0.2,0.2,0.7}
\definecolor{nred}{rgb}{0.7,0.2,0}
\definecolor{ngreen}{rgb}{0.0,0.8,0.2}

\newtheorem*{definition}{Definition}

\newtheorem{theorem}{Theorem}
\newtheorem*{observation}{Observation}

\begin{document}

\title{No-go theorems for $\psi$-epistemic models based on a continuity assumption}

\author{M. K. Patra$^{1}$, S. Pironio$^{1}$, and S. Massar$^{1}$}
\affiliation{$^{1}$Laboratoire d'Information Quantique, Universit{\'e} libre de Bruxelles (ULB), 1050 Bruxelles, Belgium}

\begin{abstract}
The quantum state $\psi$ is a mathematical object used to determine the probabilities of different outcomes when measuring a physical system. Its fundamental nature has been the subject of discussions since the inception of quantum theory: is it \emph{ontic}, that is, does it correspond to a real property of the physical system? Or is it \emph{epistemic}, that is, does it merely represent our knowledge about the system?
Assuming a natural continuity assumption and a weak separability assumption, we show here that epistemic interpretations of the quantum state are in contradiction with quantum theory. Our argument is different from the recent proof of Pusey, Barrett, and Rudolph and it already yields a non-trivial constraint on $\psi$-epistemic models using a single copy of the system in question. 
\end{abstract}
\maketitle

\emph{Introduction.}
Quantum theory textbooks usually start from the hypothesis that to every physical system corresponds a mathematical object -- a ray in Hilbert space -- called the {\em quantum state}. They then go on to deduce the multitude of quantitative predictions that make quantum theory so successful. But does the quantum state correspond to a real physical state or does it merely represent an observer's knowledge about the underlying reality? A major reason for doubting the reality of the quantum state is that it cannot be observed directly: it can only be reconstructed indirectly by lengthy state estimation procedures \cite{Estimation,WeakMeas}. Furthermore, an epistemic interpretation of the quantum state could provide an intuitive explanation for many counterintuitive quantum phenomena and paradoxes, such as the measurement postulate and wavefunction collapse
 \cite{CFS2002,bz,Spekkens2007}.

To formulate with precision the above question, we assume, following \cite{HarriganSpekkens}, that every quantum system possesses a real physical state (also called ontic state), denoted $\lambda$, which is independent of the observer. When a measurement is performed on the system, the probabilities to get different outcomes are determined by  $\lambda$. If an ensemble of such systems is prepared, different members of the ensemble may be found in different states $\lambda$. A preparation procedure $Q$ therefore corresponds in general to a probability distribution $P(\lambda|Q)$ over the real states. The probability to obtain the outcome $r$ when preparation $Q$ is followed by measurement $M$ is $P(r|M,Q) = \sum_{\lambda}  P(r|M,\lambda)P(\lambda|Q)$. Such a model will reproduce the quantum predictions if $P(r|M,Q)=\langle \psi_Q |\mathcal{M}_r|\psi_Q\rangle$, where $\psi_Q$ is the quantum state assigned by quantum theory to the preparation $Q$ and $\mathcal{M}_r$ is the quantum operator describing the measurement.

We can now distinguish two classes of models of the above type. A model is said to be $\psi$\emph{-ontic} if the preparation of distinct pure quantum states always give rise to distinct real states. That is, for every $\lambda$ either $P(\lambda|Q)=0$ or $P(\lambda|Q')=0$ if the preparations $Q$ and $Q'$ correspond to different quantum states $|\psi_Q\rangle \neq |\psi_{Q'}\rangle$. In this case, every real state $\lambda$ is compatible with a unique pure quantum state. The quantum state is ``encoded'' in $\lambda$ and we can consider it to represent a real property of the system, akin, e.g., to the total energy of a system in classical physics \cite{Leifer11}. In the second class of models, known as $\psi$\emph{-epistemic} models, preparation of distinct pure quantum states may result in the same real state $\lambda$. Formally, there exists preparations $Q$ and $Q'$ corresponding to distinct quantum states  $|\psi_Q\rangle \neq |\psi_{Q'}\rangle$ such that both $P(\lambda|Q)>0$ and $P(\lambda|Q')>0$ for some $\lambda$. In this case, the quantum state is not uniquely determined by the underlying real state and has a status analogous, e.g., to the Liouville distribution in statistical physics.

While non-trivial $\psi$-epistemic models exist in any fixed dimension $d$ \cite{QSCanBe,ABCL13}, such models are necessarily highly contrived. Indeed, Pusey, Barrett, and Rudolph (PBR) have recently shown that the predictions of $\psi$-epistemic models are in contradiction with quantum theory under the assumption, termed \emph{preparation independence}, that independently prepared pure quantum states correspond to product distributions over ontic states \cite{QSCannotBe}. In the present work, we derive two alternative no-go theorems for $\psi$-epistemic models based on a natural assumption of continuity. Our approach shows that already at the level of a single system there exist strong constraints on $\psi$-epistemic models. Furthermore, our first no-go theorem readily translates in a simple experimental test, an implementation of which has been reported in \cite{arxiv1} using high-dimensional attenuated coherent states of light traveling in an optical fibre.

Constraints on $\psi$-epistemic models at the level of single quantum systems have also been obtained in \cite{hardy12} using an assumption termed \emph{ontic indifference}. This assumption is in fact closely related to the one presented here.
In Section 2 of the Supplemental Material~\cite{SuppMat}, we show how to use our approach to recover, in a simple and clear way, those of \cite{hardy12}. Arguments using single quantum systems have also been used in \cite{M12,LM12} to show that $\psi$-epistemic model cannot be ``maximally epistemic" (in a sense defined in \cite{M12,LM12}.

\emph{No-go theorems for $\psi$-epistemic models.}
The key motivation behind our result is that  $\psi$-epistemic models should satisfy a form of continuity. 
Indeed, we assign an ontic status to $\psi$ if a variation of $\psi$ necessarily implies a variation of the underlying reality $\lambda$, and we assign it an epistemic status if a variation of $\psi$ does not necessarily imply a variation of $\lambda$. It is then natural to assume a form of continuity for $\psi$-epistemic models:  a slight change of $\psi$  induces a slight change  in the corresponding ensemble of $\lambda$'s in such a way that at least some $\lambda$'s from the initial ensemble will also belong to the perturbed ensemble. We use a slightly stronger form of continuity which asserts that there are real states $\lambda$ in the initial ensemble that will remain part of the perturbed ensemble, no matter how we perturb the initial state, provided this perturbation is small enough. Models that violate this condition are presumably very contrived. 
Formally this continuity condition is defined as follows (see Fig.~\ref{fig:Model} for a depiction of the difference between $\psi$-ontic and $\delta$-continuous $\psi$-epistemic models).

\begin{figure} \begin{center}
\includegraphics[scale=0.4]{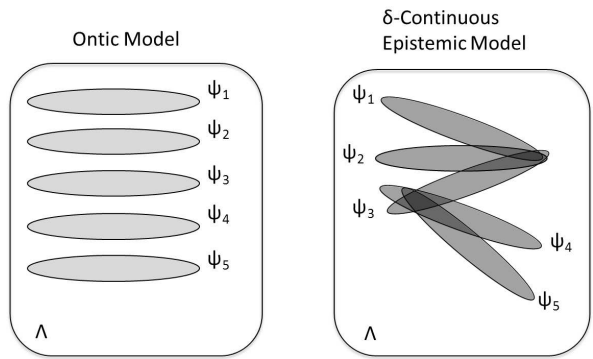}
\caption{Illustration of $\psi$-ontic and $\delta$-continuous $\psi$-epistemic models. Depicted is the space $\Lambda$ of ontic states, as well as the support of the probability distribution $P(\lambda|Q_k)$ for preparation $Q_k$ associated to distinct pure states $\psi_k$, $k=1,\ldots,5$. In $\psi$-ontic models (left) distinct quantum states give rise to probability distribution $P(\lambda|Q_k)$ with no overlap. In $\delta$-continuous $\psi$-epistemic models (right), states that are close to each other (such as $\{\psi_1,\psi_2,\psi_3\}$ and $\{\psi_3, \psi_4,\psi_5\}$) all share common ontic states. However states that are further from each other (such as $\psi_1$ and $\{\psi_4,\psi_5\}$) do not necessarily have common ontic states $\lambda$.}
\label{fig:Model}
\end{center} \end{figure}

\begin{definition}[$\delta$-continuity]
Let $\delta>0$ and let $B_\psi^\delta$ be the ball of radius $\delta$ centered on $|\psi\rangle$, i.e., $B_\psi^\delta$ is the set of states $|\phi\rangle$ such that $|\langle\phi|\psi\rangle| \geq 1-\delta$. We say that a model is $\delta$-continuous if for any preparation $Q$, there exists an ontic state $\lambda$ (which can depend on $Q$) such that for all preparations $Q'$ corresponding to quantum states $|\phi_{Q'}\rangle$ in the ball $B^\delta_{\psi_Q}$ centered on the state $|\psi_Q\rangle$, we have $P(\lambda|Q')>0$. 
\end{definition}

Note that for notational simplicity we  formulate our results in the case where the set $\Lambda=\{\lambda\}$ of real states is finite or denumerable. The generalisation of Theorems~1 and 2 below to measurable spaces is given in Section 1 of the Supplemental Material~\cite{SuppMat}. This generalisation is important since reproducing the predictions of even a single qubit requires an infinite, and probably even uncountably infinite, number of real states, see \cite{hardy04,Montina08,Montina11} for evidence to this effect.

Note also that the above definition introduces a connection between the overlap of quantum states and the overlap of distributions in the ontic space of $\lambda$s. This is extremely natural if we do not introduce a privileged direction in Hilbert space (i,e, a preferred basis), since then the properties of psi-epistemic models can only depend on the geometry of the Hilbert space.

Our first result is a constraint on $\delta$-continuous models for single systems. 
\begin{theorem} There are no $\delta$-continuous models with $\delta \geq 1-\sqrt{(d-1)/d}$ reproducing the measurement statistics of quantum states in a Hilbert space of dimension $d$.
\end{theorem}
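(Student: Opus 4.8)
The plan is to combine the $\delta$-continuity hypothesis with an \emph{exclusion} property that every ontological model reproducing the quantum statistics must satisfy. The key auxiliary fact I would establish first is: if a family of states $\{|\phi_1\rangle,\ldots,|\phi_n\rangle\}$ is \emph{antidistinguishable}, meaning there is a measurement $M$ with outcomes $1,\ldots,n$ such that $P(i|M,Q_{\phi_i})=0$ for every $i$, then the distributions $P(\cdot|Q_{\phi_i})$ have no ontic state in common. Indeed, were some $\lambda$ in all their supports, then from $\sum_i P(i|M,\lambda)=1$ some outcome $i^\ast$ would satisfy $P(i^\ast|M,\lambda)>0$, giving $P(i^\ast|M,Q_{\phi_{i^\ast}})\ge P(i^\ast|M,\lambda)\,P(\lambda|Q_{\phi_{i^\ast}})>0$ and contradicting antidistinguishability. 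The whole argument then reduces to exhibiting, whenever $\delta\ge 1-\sqrt{(d-1)/d}$, an antidistinguishable family all of whose members lie in a single ball $B^\delta_{\chi}$, since $\delta$-continuity would then force a common $\lambda$ and collide with this lemma.

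For the construction I would take the fully symmetric configuration. Fixing an orthonormal basis $\{|e_1\rangle,\ldots,|e_d\rangle\}$, set $|\phi_i\rangle=\frac{1}{\sqrt{d-1}}\sum_{j\neq i}|e_j\rangle$ and centre $|\chi\rangle=\frac{1}{\sqrt d}\sum_{j}|e_j\rangle$. A short computation gives $|\langle\chi|\phi_i\rangle|=\sqrt{(d-1)/d}$ for all $i$, so every $|\phi_i\rangle$ satisfies $|\langle\chi|\phi_i\rangle|\ge 1-\delta$, i.e. lies in $B^\delta_{\chi}$, exactly when $\delta\ge 1-\sqrt{(d-1)/d}$.

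That this family is antidistinguishable is immediate: measuring in the basis $\{|e_i\rangle\}$ yields $P(i|M,Q_{\phi_i})=|\langle e_i|\phi_i\rangle|^2=0$, since $|\phi_i\rangle$ has no component along $|e_i\rangle$. To conclude, I would argue by contradiction: assuming a $\delta$-continuous model with $\delta\ge 1-\sqrt{(d-1)/d}$, apply $\delta$-continuity to the preparation of $\chi$ to obtain an ontic state $\lambda$ with $P(\lambda|Q')>0$ for every $Q'$ whose state lies in $B^\delta_{\chi}$; in particular $P(\lambda|Q_{\phi_i})>0$ for all $i$, contradicting the exclusion lemma applied to $\{|\phi_i\rangle\}$.

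The conceptual crux, and where I expect the real work to lie, is recognising which configuration is optimal: one wants the largest antidistinguishable family packed into the smallest possible ball, and it is the symmetric ``deleted uniform superposition'' states that saturate the overlap $\sqrt{(d-1)/d}$ and hence pin down the exact threshold. The overlap computation itself is routine; the step demanding genuine care is the exclusion lemma, which must be phrased at the level of supports using the completeness relation $\sum_i P(i|M,\lambda)=1$. Making this support-level reasoning rigorous for a general measurable space $\Lambda$, rather than the denumerable case stated here, is precisely the subtlety deferred to the Supplemental Material.
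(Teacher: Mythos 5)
Your proposal is correct and follows essentially the same route as the paper: the same ``deleted uniform superposition'' states $|\phi_i\rangle=\frac{1}{\sqrt{d-1}}\sum_{j\neq i}|e_j\rangle$ centred on the uniform superposition, the same basis measurement, and the same use of $\sum_i P(i|M,\lambda)=1$ to turn a common ontic state into a nonzero outcome probability. The only cosmetic difference is that you package the final step as an ``exclusion lemma'' for antidistinguishable sets, whereas the paper runs the equivalent inequality chain $\sum_k P(k|M,Q_k)\geq\sum_\lambda\min_k P(\lambda|Q_k)>0$ directly, which additionally yields the quantitative bound on the overlap $\epsilon$ used later for the experimental robustness discussion.
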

\begin{proof}
Consider $d$ preparations $Q_k$ ($k=1,\ldots,d$) corresponding to distinct quantum states $|\psi_k\rangle$ all contained in a ball of radius $\delta$. By definition of a $\delta$-continuous model, there is at least one $\lambda$ for which $\min_{k}P(\lambda|Q_k) > 0$ and thus 
\begin{equation} \epsilon \equiv \sum_\lambda \min_{k}P(\lambda|Q_k) > 0\,.
\end{equation}
This last quantity can be viewed as a measure of the extent to which distributions over real states overlap in the neighborhood of a given quantum state. It was also introduced in \cite{QSCannotBe} where it was shown to be related to the variational distance between the distributions $P(\lambda|Q_k)$. 

Suppose now that a measurement $M$ yielding one of the possible outcomes $r=1,\ldots,d$ is made on each of the prepared systems. A $\delta$-continuous model then makes the prediction
\begin{eqnarray} 
\sum_{k}P(k|M, Q_k) &=& \sum_{k}\sum_{\lambda}P(k|M,\lambda)P(\lambda|Q_k)\nonumber\\
& \geq &\sum_{k}\sum_{\lambda}P(k|M,\lambda)\min_{k}P(\lambda|Q_k)\nonumber\\
& =& \sum_{\lambda}\min_{k}P(\lambda|Q_k) = \epsilon > 0 \, .\label{eq:NoGo}
\end{eqnarray}
According to quantum theory, however, there exist states in a Hilbert space of dimension $d$ contained in a ball of  radius $\delta = 1-\sqrt{(d-1)/d}$ such that the left-hand side of Eq.~(\ref{eq:NoGo}) is equal to $0$. To show this, let $\{|j\rangle \,:\, j=1,\ldots,d\}$ be a basis of the Hilbert space. Consider the $d$ distinct states $|\psi_k\rangle = \frac{1}{\sqrt{d-1}}\sum_{j\neq k}|j\rangle$. These states are all at mutual distance $|\langle\psi_k|\psi\rangle| = \sqrt{(d-1)/d}$ from the state $|\psi\rangle = \frac{1}{\sqrt{d}}\sum_{j}|j\rangle$. Let the measurement $M$ be the measurement in the basis $\{|j\rangle\}$. Then $P(k|M,Q_k)=0$ for all $k=1,\ldots,d$ and thus $\sum_{k}P(k|M,Q_k)=0$. (These states and measurements were considered in the $d=3$ case in \cite{CFS2002b}).
\end{proof}
Note that the above result also applies if we only require $\delta$-continuity to hold around some fixed quantum states rather than for all states in Hilbert space. Interestingly, the $\psi$-epistemic model of \cite{QSCanBe} for Hilbert spaces of dimension $d$  is $\delta$-continuous around a specific state with a value of $\delta$ saturating the above bound.

Though one expects a $\psi$-epistemic model to be $\delta$-continuous for some value of $\delta$, the bound derived in Theorem~1 may a priori seem arbitrary. This motivates the following definition.

\begin{definition}[Continuity] A $\psi$-epistemic model is continuous if there exists a non-zero $\delta>0$ such that it is $\delta$-continuous. 
\end{definition}

Our second result shows that
there are no $\psi$-epistemic models that are both continuous and which satisfy the following separability assumption. A similar condition was independently introduced in \cite{SchlosshauerFine12}, where it is called ``compactness". Though weaker than the preparation independence assumption explicitly used by PBR, it is already sufficient to derive their main result. 

\begin{definition}[Separability]
Let $Q$ be the preparation of a physical system yielding with non-zero probability $P(\lambda|Q)>0$ the real state $\lambda$. A model is separable if  $n$ independent copies $Q^n=(Q,\ldots,Q)$ of the preparation devices yield with non-zero probability $P(\vec{\lambda}=\lambda^n|Q^n)>0$ a system in the joint real state $\lambda^n=(\lambda,\ldots,\lambda)$, for any positive integer $n$.
\end{definition}
\begin{theorem}
Separable continuous $\psi$-epistemic models cannot reproduce the measurement statistics of quantum states in a Hilbert space of dimension $d\geq 3$.
\end{theorem}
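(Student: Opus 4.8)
The plan is to amplify the fixed continuity radius $\delta$ by passing to many identical copies, the role of separability being to transport the single-copy overlap guaranteed by continuity into the tensor-product space where the inequality of Theorem~1 becomes lethal. First I would fix a single state $|\psi\rangle$ in $\mathbb{C}^d$ and invoke continuity: there is a $\delta>0$ and an ontic state $\lambda_0$ with $P(\lambda_0|Q')>0$ for every preparation $Q'$ whose state lies in $B^\delta_\psi$. Separability then gives, for each such $Q'$ and every $n$, that $P(\lambda_0^n|Q'^{\,n})>0$ for the diagonal state $\lambda_0^n=(\lambda_0,\dots,\lambda_0)$. Hence every product state $|\phi\rangle^{\otimes n}$ with $|\phi\rangle\in B^\delta_\psi$ shares the common ontic state $\lambda_0^n$, so that for any finite family of such states $|\Phi_k\rangle=|\phi_k\rangle^{\otimes n}$ the overlap $\epsilon_n=\sum_{\vec\lambda}\min_k P(\vec\lambda|\Phi_k)\ge \min_k P(\lambda_0^n|\Phi_k)>0$.

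Next I would reuse the inequality of Theorem~1 in dimension $d^n$. If $M$ is a complete measurement on $(\mathbb{C}^d)^{\otimes n}$ with orthonormal outcomes $|e_k\rangle$, and if each state is paired with an outcome so that $\langle e_k|\Phi_k\rangle=0$, then quantum theory predicts $\sum_k P(k|M,Q_k)=\sum_k|\langle e_k|\Phi_k\rangle|^2=0$, whereas completeness gives $\sum_k P(k|M,\vec\lambda)=1$ and the model predicts $\sum_k P(k|M,Q_k)\ge\epsilon_n>0$, a contradiction. Everything therefore reduces to constructing such a measurement out of product states drawn from the fixed ball $B^\delta_\psi$.

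For this construction I would first note that all product states $|\phi\rangle^{\otimes n}$ lie in the symmetric subspace $S=\mathrm{Sym}^n(\mathbb{C}^d)$, of dimension $\binom{n+d-1}{d-1}$. Any outcome $|e_k\rangle\in S^\perp$ is automatically orthogonal to every $|\phi\rangle^{\otimes n}$, so those outcomes may be assigned the harmless state $|\psi\rangle^{\otimes n}$. It thus suffices to build an orthonormal basis of $S$ each of whose vectors $|f\rangle$ is orthogonal to some product state with $|\phi\rangle\in B^\delta_\psi$. Equivalently, writing $P_f(\phi)=\langle f|\phi^{\otimes n}\rangle$, which is a homogeneous form of degree $n$ on $\mathbb{C}^d$, I need each $P_f$ to vanish somewhere in the ball $B^\delta_\psi\subset\mathbb{CP}^{d-1}$.

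The hard part is exactly this last step, and it is where the hypothesis $d\ge 3$ enters. For $d\ge 3$ the zero locus of $P_f$ is a hypersurface of positive dimension in $\mathbb{CP}^{d-1}$; as $n$ grows these hypersurfaces become equidistributed, so for $n$ large enough the zero set of a generic section meets the fixed ball $B^\delta_\psi$, and a dimension count then lets me assemble a full orthonormal basis of $S$ with a nearby product state attached to each vector. By contrast, for $d=2$ the form $P_f$ has only finitely many zeros on $\mathbb{CP}^1$, which may all avoid $B^\delta_\psi$, so the construction breaks down, consistent with the existence of continuous $\psi$-epistemic models for qubits. Since $\delta$ is held fixed and only $n$ is sent to infinity, I expect the genericity and equidistribution estimate controlling the zeros of these high-degree forms near $|\psi\rangle$ to be the principal technical obstacle.
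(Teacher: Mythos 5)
Your overall architecture is the same as the paper's: use continuity to get a common ontic state $\lambda_0$ for all preparations in a fixed small ball, use separability to promote it to a common ontic state $\lambda_0^n$ for the $n$-fold product preparations, and then derive a contradiction from the Theorem-1 inequality applied to a measurement that ``antidistinguishes'' a family of product states drawn from the ball. That reduction is sound. The gap is in the only step that actually requires work: exhibiting the states and the measurement. Your plan is to build a full rank-one orthonormal basis of $\mathrm{Sym}^n(\mathbb{C}^d)$ in which every vector $|f\rangle$ is orthogonal to some $|\phi\rangle^{\otimes n}$ with $\phi\in B^\delta_\psi$, and you justify this by asserting that for large $n$ the zero locus of a \emph{generic} degree-$n$ form meets the fixed ball. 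This claim is false as stated: the form $P_f(\phi)=\langle\psi|\phi\rangle^n$ (i.e.\ $|f\rangle\propto|\psi\rangle^{\otimes n}$), and more generally any product of linear forms $\prod_i\langle\chi_i|\phi\rangle$ with each $\chi_i$ close to $\psi$, has zero locus entirely outside $B^\delta_\psi$ for every $n$; the forms avoiding the ball form a nonempty open set, so genericity cannot hold, and in any case ``generic'' would not suffice because you need \emph{every} member of the basis to have the property. The positive dimension of the hypersurface for $d\geq 3$ does nothing to force it near $\psi$, so your stated reason for the restriction to $d\geq3$ is not the real obstruction either.

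The paper closes exactly this gap with an explicit, elementary construction that needs only $d$ preparations rather than your $\sim d^n$: take $|\phi_k\rangle=\alpha|k\rangle+(\beta/\sqrt d)\sum_i|i\rangle$ with pairwise overlap $|\langle\phi_k|\phi_l\rangle|=[(d-2)/(d-1)]^{1/n}$, all within distance $\delta_{nd}=O(1/n)$ of $\frac{1}{\sqrt d}\sum_i|i\rangle$. Their $n$-fold tensor powers then have pairwise overlap $(d-2)/(d-1)$, which is precisely the Gram matrix of the antidistinguishable states $|\psi_k\rangle=\frac{1}{\sqrt{d-1}}\sum_{j\neq k}|j\rangle$ of Theorem~1; a unitary on their span transports the Theorem-1 measurement over, giving $\sum_k P(k|M,Q_k^n)=0$. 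This is where $d\geq3$ genuinely enters: for $d=2$ the required overlap $(d-2)/(d-1)$ is zero, so no amount of tensoring lets non-orthogonal single-copy states become antidistinguishable. Note also that once these $d$ states are in hand, the full orthonormal basis you were after is trivial to complete (any vector in the orthocomplement of their span is orthogonal to all $d$ of them), so the algebraic-geometric machinery you flag as the principal obstacle is a detour around, not a route to, the missing ingredient.
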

\begin{proof}
The idea of the proof is to fix an arbitrarily small $\delta>0$, and choose specific states $|\phi_{k}\rangle$, all within the ball of radius $\delta$. Because of $\delta$-continuity, these states share a common ontic state. Using separability, the states $|\phi_k^{\otimes n}\rangle$ also share a common ontic state. By taking $n$ large enough, the distance between the tensor products $|\phi_k^{\otimes n}\rangle$ becomes large enough that we can apply Theorem 1. More in detail we proceed as follows.

Consider $d\geq 3$ preparations $Q_k$ corresponding to the $d$ distinct states $|\phi_{k}\rangle=\alpha|k\rangle+{\beta}/{\sqrt{d}}\sum_{i=1}^d|i\rangle$ with $k=1,\ldots,d$, $\alpha=-\sqrt{1-[(d-2)/(d-1)]^{1/n}}$ and $\beta=-{\alpha}/{\sqrt{d}}+\sqrt{{\alpha^2}/{d}+[(d-2)/(d-1)]^{1/n}}$.
It is easily checked that these states are normalised $ \langle \phi_k\vert\phi_k\rangle=1$, have mutual scalar product $|\langle\phi_k|\phi_{l}\rangle|=[(d-2)/(d-1)]^{1/n}$ for $k\neq l$, and are all at distance $|\langle\phi_k|\phi\rangle|=1-\delta_{nd}$  from the state $|\phi\rangle=\frac{1}{\sqrt{d}}\sum_{i=1}^d |i\rangle$, where $\delta_{nd}=1-\sqrt{1-(d-1)\alpha^2/d}$. In a $\delta_{nd}$-continuous model, these states share at least a common real state $\lambda$. The separability assumption then implies that the states $|\phi_k^{\otimes n}\rangle$ also share a common real state, and thus that $\epsilon_n\equiv \sum_\lambda \min_{k}P(\lambda|Q^n_k) > 0$ . By the same argument as in Theorem~1, it then follows that if a measurement $M$ yielding one of the possible outcomes $r=1,\ldots,d$ is performed on each of these systems, the quantity $\sum_k P(k|M,Q^n_k)\geq \epsilon_n>0$. 

Note now that the $d$ $n$-fold copies $|\phi_k^{\otimes n}\rangle$ are normalised and have mutual scalar product $|\langle\phi_k^{\otimes n}|\phi_l^{\otimes n}\rangle|=(d-2)/(d-1)$ for $k\neq l$. There therefore exists a unitary transformation $U$ in the subspace $S_d\subset \mathbb{C}_d^{\otimes n}$ spanned by the $d$ states $|\phi_k^{\otimes n}\rangle$ that carries out the transformation: $U|\phi_k^{\otimes n}\rangle=|\psi_k\rangle$ where $|\psi_k\rangle=\frac{1}{\sqrt{d-1}}(\sum_{j=1}^d|j\rangle-|k\rangle)$ for some  basis $\{|j\rangle\}$. The states $|\psi_k\rangle$ are identical to the states used in the proof of Theorem~1. It follows that there exists a $d$-outcome measurement $M$ in $\mathbb{C}_d^{\otimes n}$ which applied on the states $|\phi_k^{\otimes n}\rangle$ gives the same statistics as the measurement in the basis $\{|j\rangle\}$ applied on the states $|\psi_k\rangle$ of Theorem 1.  We can therefore find a measurement such  that $P(k|M,Q_k^n)=0$ and thus $\sum_k P(k|M,Q_k^n)=0$ in contradiction with the prediction of a $\delta_{nd}$-continuous separable $\psi$-epistemic model.

We have thus shown that one can exclude $\delta$-continuous separable models with $\delta \geq \delta_{nd}$ for any positive integers $d$ and $n$. 
For large $n$ this bound behaves as $\delta \gtrsim {\gamma}/{n}$, with $\gamma = (d-1)[\log(d-1)-\log(d-2)]/(2d)$, thereby implying by taking $n$ arbitrarily large that no $\psi$-epistemic model for Hilbert spaces of dimension $d \geq 3$ can satisfy both the assumptions of {continuity}  and {separability}. 
\end{proof}

It is interesting to compare how these no-go theorems could be used in practice to test $\psi$-epistemic models (see \cite{arxiv1,Nigg12} for actual tests). Experimental tests that rule out $\psi$-epistemic models for smaller values of the continuity parameter $\delta$ are clearly stronger. We thus consider how resources scales as $\delta \rightarrow 0$. If we use the construction of Theorem 1, then we need to use systems of dimension $d=O(1/(2\delta))$, and the resources needed to test $\psi$-epistemic models increase as $O(1/\delta)$. If we use Theorem 2, and take e.g. the dimension $d=3$, then one needs to prepare three states $|\phi_1^{\otimes n}\rangle,|\phi_2^{\otimes n}\rangle,|\phi_3^{\otimes n}\rangle$, where the number of copies of each state is
$n=O(\ln 2/(3\delta))$. Again the resources needed  increase as $O(1/\delta)$. Finally, we could also test $\delta$-continuous $\psi$-epistemic models using the construction given in PBR \cite{QSCanBe}. In this case we need to prepare $2^n$ distinct states, each of which is a product  state of $n$ qubits, with $n=O(\sqrt{2}\ln 2/\sqrt{\delta})$. The resources required for the application of the PBR construction therefore grow exponentially in $1/\sqrt{\delta}$. Experimental tests based on Theorems 1 and 2 thus seem much easier than those based on the PBR construction.

Note that our theoretical arguments and PBR's one rely on the fact that certain quantum probabilities are exactly equal to zero. However, these arguments are robust against small deviations from these predictions, as expected in an experimental implementation where noise is inevitably present.
Indeed, Eq.~(\ref{eq:NoGo}) implies that the observed value $\epsilon_\mathrm{exp}=\sum_k P(k|M,Q_k)$ provides  an upper bound on the overlap
$\epsilon =\sum_\lambda \min_{k}P(\lambda|Q_k)$ of the ontic distributions $P(\lambda|Q_k)$.
A small value of $\epsilon_\mathrm{exp}$ therefore translates into a strong constraint on continuous $\psi$-epistemic models, since it implies that these distributions have only a small common overlap.
Similarly $\epsilon^n_\mathrm{exp}=\sum_k P(k|M,Q^n_k)$ in Theorem~2 upper bounds the overlap $\epsilon_n =\sum_{\vec{\lambda}} \min_{k}P(\vec{\lambda}|Q^n_k)$ of the $n$-copy joint distributions $P(\vec{\lambda}|Q^n_k)$. This last quantity can simply be related to the single-copy overlap $\epsilon$ if we further make the preparation independence assumption of PBR that joint distributions $P(\vec{\lambda}|Q^n_k)=P(\lambda_1,\ldots\lambda_n|Q^n_k)=P(\lambda_1|Q_k)\ldots P(\lambda_n|Q_k)$ are product of individual distributions, which then implies $\epsilon_n=\epsilon^n$.
Note that a comparison of the sensitivity of the different tests to experimental noise is possible, but goes beyond the scope of the present work. It would require a detailed modeling of the state preparation and measurement procedures.

\emph{Discussion.}
In his seminal paper on the probabilistic interpretation of quantum theory, Born gave the wavefunction a functional interpretation: a mathematical object from which the probabilities of different measurement outcomes can be determined \cite{Born}. But the fundamental nature of this object, a real physical wave or a summary of our knowledge about physical systems, is a question that has divided physicists ever since. A precise formulation of these two alternatives, opening the way to clear-cut answers, was provided by Harrigan and Spekkens \cite{HarriganSpekkens}. 
If the wavefunction corresponds to a real, ontic, property of physical systems, the preparation of a system in different pure quantum states should always result in different physical states. If, on the other hand, the wavefunction has an epistemic status, such preparations should sometimes result in the same underlying physical state. PBR have recently introduced a no-go theorem that, given certain assumptions, rules out this latter possibility \cite{QSCannotBe}, thus awarding ontic status to the wavefunction. This result can also be seen as a constraint on the structure of possible extensions or generalizations of quantum theory. If they reproduce the quantum predictions and satisfy these assumptions, then such theories can only \emph{supplement} the wavefunction $\psi$ with additional variables $\lambda'$, i.e., a system should be described by a physical state of the form $\lambda=(\psi,\lambda')$.

The theorem of PBR relies on two main assumptions. The first, which is also unquestioned in the present work, is that a system has a real and objective state $\lambda$ that is independent of the observer. The second is an assumption of ``preparation independence", which states that independently prepared systems are described by independent product distributions over real states. It can be replaced by the  weaker separability assumption used here. 

In the present work, we reached the same conclusion as PBR using a simple argument that relies on a natural assumption of continuity. This notion of continuity captures the intuition that in a model where the quantum state is epistemic, a small variation of $\psi$ does not necessarily imply a variation of the underlying real state $\lambda$. 
We derived a fundamental limit on the degree of continuity of $\psi$-epistemic models, as parametrized by a quantity $\delta$, already at the level of single quantum systems (Theorem~1). 
Combining our continuity assumption with a separability assumption, we then showed that no $\psi$-epistemic model can reproduce all the predictions of quantum theory (Theorem~2). 

Besides their simplicity and the fact that they already constrain $\psi$-epistemic models for single quantum systems, an interest of our results is that they are easy to implement experimentally. Such an experimental test based on Theorem~1 has been reported in \cite{arxiv1}. 

\emph{Acknowledgments.}
M.~P. would like to thank Rob Spekkens for helpful discussions. We thank anonymous referees for useful comments. We acknowledge financial support from the European Union under project QCS, from the FRS-FNRS under project DIQIP, and from the Brussels-Capital Region through a BB2B grant.

\appendix
\section{General formulation of the no-go theorems in the context of measurable spaces}\label{appA}

Here we formulate our main result in the context of measurable spaces. Two main cases are of interest: when the space of ontic states $\Lambda$ is discrete, and when it is absolutely continuous. The discrete case was treated in the main text. It has the advantage of notational simplicity, but is probably difficult to justify as it is known that reproducing the predictions of quantum theory requires that the space of ontic states $\Lambda$ be continuous \cite{hardy04,Montina08,Montina11}. The absolutely continuous case is more natural, but the notation and proofs are slightly more cumbersome. In the following we first introduce a  general measure theoretic framework for ontic states, and then restrict to the absolutely continuous case. 

The state space $\Lambda$ is assumed to be a (probability) measure space with a fixed $\sigma$-algebra $\Omega$ of events. We assume the existence of a reference measure $m$ on $\Lambda$. For a measurable set $A \subset \Lambda$, $m(A)=\int \chi_A {\mathrm dm}(\lambda)$, where $\chi_A$ is the indicator function for $A$ (1 on $A$ and 0 elsewhere).

Each preparation $Q$ induces a conditional probability measure $P(-|Q)$ defined on $\Omega$. From the Lebesgue decomposition theorem \cite{Rudin_RealAnalysis} we infer that the measure $P(-|Q)$ can be decomposed as
\[ P(-|Q) = P_s(-|Q) + P_a(-|Q) \, , \] 
where the measures $P_s(-|Q)$ and $P_a(-|Q)$ are respectively singular and absolutely continuous with respect to $m$. Recall that a measure $\mu$ is concentrated on a set $C$ if for any measurable set $B$, $\mu(B)=\mu(B\cap C)$, and that two measures are singular with respect to each other if they are concentrated on disjoint sets. Furthermore the singular part $P_s(-|Q)$ can be decomposed into a discrete singular part which consists of a sum of Dirac measures (``$\delta$ functions'') and a singular continuous part.

In the remainder we consider the case where $P(-|Q)$ only consists of an absolutely continuous part. This is the assumption that is generally made in the literature. The case where there is only a discrete singular part was discussed in the main text. Note that our results can be extended to the case where either the absolutely continuous or the discrete singular part are non-zero. Our results do not apply to the case where there is only a singular continuous part.

We denote by $S_m$ the set on which $m$ is concentrated. The fact that the measure $P(-|Q)$ is absolutely continuous with respect to $m$ ($P(A|Q)=0$ whenever $m(A)=0$) implies that there is an $m$-measurable function $p(\lambda|Q)$ on $\Lambda$ such that for all $A\in \Omega$
\[ P_a(A|Q) = \int_A p(\lambda|Q) {\mathrm dm}(\lambda) \, . \]
In integration theory the function $p(\lambda|Q)$ is called the Radon-Nikodym derivative \cite{Rudin_RealAnalysis}.

We further suppose that the probability of obtaining result $r$ given measurement $M$ and ontic state $\lambda$ is given by an $m$-measurable function $p(r|M,\lambda)$. Then we have
\begin{equation}
P(r|M,Q) = \int_{\Lambda} p(r|M,\lambda) p(\lambda|Q) {\mathrm dm}(\lambda) \, .
\end{equation}
We can define the support $Sup_Q$ of the function $p(\lambda|Q)$ as the complement of the set on which it vanishes. Since $m(Sup_Q)=m(S_m\cap Sup_Q)$ there will be no loss of generality if we restrict the support to $S_m$. (Note that if the measure is not absolutely continuous, we cannot introduce in general the concept of ``support'' of the measures $P(-|Q)$ as it requires a topology on $\Lambda$ which we do not assume).

Given the above framework, we now state and prove the results presented in the main text in the case of  absolutely continuous measures. This includes the case where $\Lambda=\mathbb{R}^n$ with $m$ the standard measure on $\mathbb{R}^n$.

\begin{definition}[$\psi$-epistemic model] A statistical model is $\psi$-epistemic if there exist two preparation procedures $Q_1$ and $Q_2$ giving two distinct quantum states such that $m(Sup_{Q_1} \cap Sup_{Q_2}) > 0$. \end{definition}

\begin{definition}[Continuity] A $\psi$-epistemic model is $\delta$-continuous if, given the preparation Q of the pure quantum state $\psi_Q$, there is a set $A^\delta \in \Lambda$ (which can depend on $Q$) such that for all states $\phi$ satisfying $1-|{\langle{\phi}|{\psi}\rangle}| \leq \delta$, $A^\delta \subset Sup_{Q_\phi}$ and $\int_{A^\delta} p(\lambda|Q_\phi) {\mathrm dm}(\lambda) > 0$. Note that this implies that $m(A^\delta)>0$. A model is continuous, if it is $\delta$-continuous for some $\delta>0$. \end{definition}

Why is $\delta$-continuity a ``natural'' condition? Informally the reason is that --except in rare occasions-- physical quantities depend continuously on their parameters. Discontinuities are generally attributable to mathematical idealisation. To illustrate this in the present case, suppose that we can write $Q=(Q',\psi_Q)$, where $\psi_Q$ is the pure state produced by $Q$, and $Q'$ are additional variables specifying the preparation. Furthermore suppose that $\lambda, Q', \psi_Q$ all belong to compact subsets of real euclidean space $\mathbb{R}^n$, and that the probability distribution $p(\lambda|Q',\psi_Q)$ is a continuous function of all its parameters. Then it is uniformly continuous, and hence $\delta-$continuous for some $\delta>0$.

\begin{proof}[\textbf{Proof of Theorem 1}] Let the dimension of the quantum state space be $d$. We consider the same $d$ independent non-orthogonal states $\psi_k = \frac{1}{\sqrt{d-1}}\sum_{j\neq k} |j\rangle$ as in the main text. Suppose the model is $\delta$-continuous for $\delta \geq 1-\sqrt{(d-1)/d}$. Then there exists a set $A^\delta \subset Sup_{\psi_k}$ for all $k=1,\ldots,d$ with $m(A^\delta)>0$. Define a new function $\tilde{p}(\lambda) =\min_k p(\lambda|{\psi_k})$. Then we have
\begin{eqnarray} \label{eq:no-goGen}
\sum_k p(k|{\psi_k}) &=& \int_\Lambda \sum_k p(k|\lambda) p(\lambda|{\psi_k}) {\mathrm dm}(\lambda) \nonumber \\
&\geq& \int_{A^\delta} \sum_k p(k|\lambda) p(\lambda|{\psi_k}) {\mathrm dm}(\lambda) \nonumber \\
&\geq& \int_{A^\delta} \sum_k p(k|\lambda) \tilde{p}(\lambda) {\mathrm dm}(\lambda) \nonumber \\
&=& \int_{A^\delta} \tilde{p}(\lambda) {\mathrm dm}(\lambda) \, .
\end{eqnarray}
It remains to show that $\int_{A^\delta} \tilde{p}(\lambda) {\mathrm dm}(\lambda)$ is strictly positive. To this end we note that for each function $p(\lambda|{\psi_k})$ there must be a positive integer $N_k$ such that the set $T_{N_k} = \{\lambda\in A^\delta|p(\lambda|{\psi_k}) > 1/N_k\}$ satisfies $m(T_{N_k})>0$. Otherwise $\int_{A^\delta} p(\lambda|{\psi_k}){\mathrm dm}(\lambda)=0$, contrary to our hypothesis. Let $M=\max\{N_k\}$. Then for the set $T_M\equiv \{\lambda\in A^\delta|\tilde{p}(\lambda)>1/M\}$ we have $T_M\supset T_{N_K}$ for all $k$ and hence $m(T_M)>0$. We conclude that
\[ \int_{A^\delta} \tilde{p}(\lambda){\mathrm dm}(\lambda)\geq \frac{1}{M} m(T_M)>0 \, . \]
We thus arrive at the contradiction $0 = \sum_k p(k|{\psi_k})>0$. 
\end{proof}

\begin{definition}[Separability.] Consider a preparation $Q$ of a physical system and any subset of ontic states $A \in \Lambda$ that occurs with non-zero probability $P(A|Q)>0$. Consider $n$ identical copies of the preparation devices, which when used yield preparation $Q^n=(Q,\ldots,Q)$. Then the product set $A^n=A\times\ldots\times A$ occurs with non-zero probability: $P(A^n|Q^n)>0$. \end{definition}

Note that according to this definition the space of ontic states for $n$ independent preparations may be larger than the product $\Lambda\times\ldots\times\Lambda$. This does not affect the definition of separability.

\begin{proof}[\textbf{Proof of Theorem 2}]
The proof of Theorem 2 in the case of absolutely continuous measures is a simple combination of the arguments given in the main text in the case of discrete singular measures, and the proof of Theorem 1 for the case of absolutely continuous measures.
\end{proof}

\section{Relation between our continuity assumption and ontic indifference}\label{appB}

Hardy recently proposed a no-go theorem for $\psi$-epistemic models for a
single system that satisfies \emph{ontic indifference}, or \emph{restricted
ontic indifference} (see definitions below) \cite{hardy12}. Here we show that ontic
indifference is an extremely strong condition for $\psi$-epistemic
models, since it implies that there are ontic states $\lambda$ that are shared by all
quantum states in the Hilbert space. It is therefore immediately in contradiction
with the predictions of quantum theory since this implies that orthogonal
quantum states have overlapping ontic distributions. Restricted ontic indifference does
not suffer from this problem, but on the other hand it breaks the
unitary invariance of Hilbert space by promoting one state to a specific status.
We show how the proof methods we use to rule out $\delta$-continuous
$\psi$-epistemic models can be easily transposed to provide no-go results
for $\psi$-epistemic models obeying restricted ontic indifference.

\begin{definition}[Ontic indifference \cite{hardy12}] Any quantum transformation
on a system which leaves unchanged any given pure state, $\vert\psi\rangle$,
can be performed in such a way that it does not affect the underlying
ontic states $\lambda\in\Lambda_{\vert\psi\rangle}$ in the ontic
support of that pure state. By the ontic support of a given state,
$\vert\psi\rangle$, one means the set $\Lambda_{\vert\psi\rangle}$
of ontic states $\lambda$ which might be prepared when the given
pure state is prepared.\end{definition}

\begin{observation} Ontic indifference implies the following fact: if two
distinct states $\vert\psi\rangle\neq\vert\psi'\rangle$ share the
same ontic state $\lambda$, that is $\lambda\in\Lambda_{\vert\psi\rangle}$
and $\lambda\in\Lambda_{\vert\psi'\rangle}$, and if $U$ is a unitary
transformation that leaves $\vert\psi\rangle$ invariant, $U\vert\psi\rangle=\vert\psi\rangle$,
then $\lambda\in\Lambda_{U\vert\psi'\rangle}$ belongs to the ontic
support of $U\vert\psi'\rangle$. \end{observation}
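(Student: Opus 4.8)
The plan is to exhibit, at the ontic level, an explicit preparation procedure that produces $U\vert\psi'\rangle$ and that yields $\lambda$ with non-zero probability; this directly establishes $\lambda\in\Lambda_{U\vert\psi'\rangle}$. The procedure is to first prepare $\vert\psi'\rangle$ and then apply the transformation $U$, implemented in the particular ``indifferent'' manner guaranteed to exist by the hypothesis of ontic indifference.

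More concretely, I would proceed in four steps. First, since by assumption $U\vert\psi\rangle=\vert\psi\rangle$, the transformation $U$ leaves the pure state $\vert\psi\rangle$ unchanged, so ontic indifference applies to $U$ with respect to $\vert\psi\rangle$: there is a physical implementation of $U$ whose action on any ontic state lying in $\Lambda_{\vert\psi\rangle}$ is the identity. Second, because $\lambda\in\Lambda_{\vert\psi\rangle}$, this fixed implementation of $U$ maps $\lambda$ to $\lambda$. Third, consider the composite preparation ``prepare $\vert\psi'\rangle$, then apply $U$'': the resulting quantum state is $U\vert\psi'\rangle$, so this is a legitimate preparation of $U\vert\psi'\rangle$. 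Since $\lambda\in\Lambda_{\vert\psi'\rangle}$, the first stage outputs the ontic state $\lambda$ with non-zero probability; the second stage, by the previous step, leaves $\lambda$ invariant. Hence this preparation of $U\vert\psi'\rangle$ outputs $\lambda$ with non-zero probability, giving $\lambda\in\Lambda_{U\vert\psi'\rangle}$ as claimed.

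The step I expect to be the main obstacle is making rigorous the passage from step two to step three, namely that the \emph{same} indifferent implementation of $U$ continues to fix $\lambda$ when the system has been prepared via $\vert\psi'\rangle$ rather than via $\vert\psi\rangle$. The crucial point is that the ontic-level dynamics induced by the device realising $U$ is a stochastic map acting on $\lambda$ alone; it is a property of the physical apparatus and cannot depend on which quantum state was nominally prepared upstream, nor on the epistemic label an observer attaches to the system. Consequently the indifferent implementation fixes every ontic state in $\Lambda_{\vert\psi\rangle}$ unconditionally, in particular the shared state $\lambda$, regardless of whether $\lambda$ was produced by preparing $\vert\psi\rangle$ or $\vert\psi'\rangle$. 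Articulating this device-level, state-label-independent reading of ontic indifference is what carries the argument, and it is exactly the feature that makes the condition so strong.
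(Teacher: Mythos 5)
Your proposal is correct and follows essentially the same route as the paper: both arguments compose the preparation of $\vert\psi'\rangle$ with the ``indifferent'' implementation of $U$ (guaranteed because $U\vert\psi\rangle=\vert\psi\rangle$ and $\lambda\in\Lambda_{\vert\psi\rangle}$) and note that this composite preparation of $U\vert\psi'\rangle$ sometimes yields the fixed ontic state $\lambda$. Your closing remark --- that the device-level action on $\lambda$ cannot depend on which quantum state was nominally prepared upstream --- makes explicit the key tacit step in the paper's own four-preparation argument.
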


This follows from considering the following 4 preparations: a specific
preparation $P_{1}$ of $\vert\psi\rangle$, the preparation $P_{1}$
followed by unitary $U$ (that leaves $\vert\psi\rangle$ invariant), a specific preparation $P_{2}$ of $\vert\psi'\rangle$, and the preparation
$P_{2}$ followed by unitary $U$ (which prepares $U\vert\psi'\rangle$).
Then we note that preparation $P_{1}$ sometimes yields the ontic
state $\lambda$, that $U$ leaves $\lambda$ invariant, that preparation
$P_{2}$ sometimes yields the ontic state $\lambda$, hence preparation
$P_{2}$ followed by $U$ sometimes yields the ontic state $\lambda$,
and therefore $\lambda\in\Lambda_{U\vert\psi'\rangle}$. 

This observation has the following consequence, which implies directly that no $\psi$-epistemic model can reproduce the predictions of quantum theory (since it implies in particular that orthogonal quantum states share a common real state $\lambda$). 
\begin{theorem} If any two distinct states $\vert\psi_{1}\rangle\neq\vert\psi_{2}\rangle$
share the same ontic state $\lambda$, that is $\lambda\in\Lambda_{\vert\psi_{1}\rangle}$
and $\lambda\in\Lambda_{\vert\psi_{2}\rangle}$, and if ontic indifference
holds, then all states in the Hilbert space share the ontic state
$\lambda$, that is $\lambda\in\Lambda_{\vert\psi\rangle}$ for all
$\vert\psi\rangle$. \end{theorem}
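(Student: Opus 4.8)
The plan is to turn the Observation into a statement about the symmetry of the set
\[ R = \{\,|\phi\rangle : \lambda \in \Lambda_{|\phi\rangle}\,\} \]
of quantum states (viewed as rays) that share the ontic state $\lambda$, and then to show that this symmetry forces $R$ to be the whole projective Hilbert space. First I would read the Observation as: if $|\phi\rangle,|\phi'\rangle\in R$ and $U$ fixes $|\phi\rangle$, then $U|\phi'\rangle\in R$. Applying this with $U$ and $U^{-1}$ shows that $R$ is invariant under the stabilizer subgroup $\mathrm{Stab}(|\phi\rangle)$ of each of its own points. Since $\mathrm{Stab}(|\phi\rangle)$ acts transitively on the ``sphere'' $\{|\eta\rangle : |\langle\phi|\eta\rangle| = c\}$ of states at fixed overlap $c$ from $|\phi\rangle$, this invariance means that $R$ is a union of such spheres: as soon as $R$ contains one state at overlap $c$ from $|\phi\rangle$, it contains all of them. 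The theorem then reduces to showing that, starting from the two distinct states $|\psi_1\rangle,|\psi_2\rangle\in R$, the set of overlaps $\{\,|\langle\psi_1|\eta\rangle| : |\eta\rangle\in R\,\}$ with the fixed state $|\psi_1\rangle$ exhausts all of $[0,1]$.

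The engine for spreading the overlaps is a single reduction step. If $|\phi\rangle,|\phi'\rangle\in R$ have overlap $c=|\langle\phi|\phi'\rangle|<1$, write $|\phi\rangle=\gamma|\phi'\rangle+\sqrt{1-c^2}\,|e\rangle$ with $|\gamma|=c$ and $|e\rangle$ orthogonal to $|\phi'\rangle$, and apply an arbitrary $U\in\mathrm{Stab}(|\phi'\rangle)$ to $|\phi\rangle$. Then $U|\phi\rangle\in R$ and $\langle\phi|U|\phi\rangle = c^2+(1-c^2)z$, where $z=\langle e|U e\rangle$ ranges over the closed unit disk when $d\geq 3$ (the complement of $|\phi'\rangle$ has dimension $\geq 2$). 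Hence $R$ contains states whose overlap with $|\phi\rangle$ covers the whole interval $[\max(0,2c^2-1),\,1]$. Taking $|\phi\rangle=|\psi_1\rangle$ and $|\phi'\rangle=|\psi_2\rangle$, the set $R$ already contains states at every overlap in $[\max(0,2c_0^2-1),\,1]$ from $|\psi_1\rangle$, where $c_0=|\langle\psi_1|\psi_2\rangle|$.

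I would then iterate. Since each reduction step adjoins an interval of the form $[\,\cdot\,,1]$, the achieved overlaps always contain an interval $[\mu,1]$; feeding a pair $(|\psi_1\rangle,|\phi\rangle)$ with $|\phi\rangle$ at overlap approaching the infimum $\mu$ back into the reduction step shows that $R$ reaches overlaps down to $\max(0,2\mu^2-1)$, so $\mu\leq\max(0,2\mu^2-1)$. But $2\mu^2-\mu-1=(2\mu+1)(\mu-1)<0$ for $\mu<1$, so $2\mu^2-1<\mu$; the inequality is therefore consistent only if $2\mu^2-1<0$, forcing $\mu=0$. Once overlap $0$ is attained we have $[0,1]$ entirely, so $R$ contains a state at every overlap from $|\psi_1\rangle$ and, being a union of full spheres, equals the entire projective Hilbert space: every $|\psi\rangle$ shares $\lambda$.

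The step I expect to be the main obstacle is controlling this spreading in the smallest dimension, $d=2$. There the complement of $|\phi'\rangle$ is one-dimensional, so $z$ runs only over the unit circle rather than the unit disk, a single reduction step reaches $[\,|2c^2-1|,1]$ instead of $[\max(0,2c^2-1),1]$, and the naive ``minimal-overlap'' recursion can stall. For the qubit I would instead argue directly on the Bloch sphere: $\mathrm{Stab}(|\psi_1\rangle)$ and $\mathrm{Stab}(|\psi_2\rangle)$ are the rotations about two distinct axes, and it is standard that rotations about two distinct axes generate all of $SO(3)$, which acts transitively; hence the orbit of $|\psi_1\rangle$ inside $R$ is already everything. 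A secondary, purely technical point is to verify that the infimum $\mu$ is attained (or approached closely enough that the reduction step applies in the limit), which holds because the reachable overlap set is a closed interval at each finite stage.
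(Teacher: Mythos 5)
Your proposal is correct, and its engine is the same as the paper's: both proofs iterate the Observation to enlarge the set $R$ of states sharing $\lambda$ until it exhausts the Hilbert space. The difference is in how the growth is controlled. The paper builds the sets $S_k$ of states reachable from $S_{k-1}$ by stabilizer unitaries, argues pictorially on the Bloch sphere that $|S_k|$ grows until it covers the whole sphere, and disposes of $d\geq 3$ with the one-line remark that $S_1$ already spans the full Hilbert space. You instead (i) observe that $R$ is a union of full constant-overlap spheres about any of its points, reducing the problem to showing the achieved overlaps with $|\psi_1\rangle$ fill $[0,1]$, and (ii) run an explicit fixed-point argument on the infimum $\mu$ of achieved overlaps via the map $c\mapsto\max(0,2c^2-1)$, which forces $\mu=0$; your handling of the non-attained infimum (passing $c\to\mu^+$ through the continuous reduction map) is sound, and your computation $\langle\phi|U|\phi\rangle=c^2+(1-c^2)\langle e|Ue\rangle$ with $z=\langle e|Ue\rangle$ covering the unit disk for $d\geq 3$ is correct. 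Your separate treatment of $d=2$ (stabilizers of two distinct rays project to rotations about two distinct Bloch axes, whose generated connected subgroup has Lie algebra all of $\mathfrak{so}(3)$ and hence is $SO(3)$, acting transitively) is also valid and is essentially a rigorous version of the paper's ``draw the successive sets $S_k$'' sketch. Net effect: your route is quantitatively explicit and arguably more complete than the paper's, at the cost of needing the dimension split $d=2$ versus $d\geq 3$, whereas the paper's version is shorter but leaves both the Bloch-sphere covering step and the higher-dimensional reduction at the level of assertion.
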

\begin{proof}
To prove this consider the following sets $S_{k}$ ($k\geq 0$) of states, where $S_{0}=\left\{ \vert\psi_{1}\rangle,\vert\psi_{2}\rangle\right\}$ and 
$S_{k}=\{|\psi\rangle:$ there exist $|\phi\rangle,|\phi'\rangle\in S_{k-1}$ and a unitary $U$ such that $U|\phi\rangle=|\phi\rangle$, $U|\phi'\rangle=|\psi\rangle\}$. That is, $S_{k}$ is the set of states that can be obtained from $S_{k-1}$ by a unitary that leaves one of the states in $S_{k-1}$ invariant.

By hypothesis, both states in $S_{0}$ share the same ontic state $\lambda$.
Hence by the above observation, all states in $S_{k},$ for all $k=0,1,2,...$,
also share the state $\lambda$.
Now consider the case of a 2-dimensional Hilbert space. If $|\langle\psi_{1}\vert\psi_{2}\rangle\vert=\cos\theta$
are at an angle $\theta$ from each other (as measured on the Bloch
sphere), then it is easy to show (just draw the successive sets $S_{k})$
that the size of $S_{k}$ increases exponentially, such that when
$k\geq O(\log\theta)$, $S_{k}$ comprises the whole Bloch sphere.
The case of a 2-dimensional Hilbert space helps to visualize what
is happening, but is not a restriction, since the linear span of the
set $S_{1}$ is the whole Hilbert space. 
\end{proof}

We now show how a simple application of our results provides a no-go theorem for $\psi$-epistemic models satisfying restricted ontic indifference.

\begin{definition}[Restricted ontic indifference \cite{hardy12}] Any quantum transformation
on a system which leaves unchanged the pure state $\vert\psi\rangle$
can be performed in such a way that it does not affect the underlying
ontic states, $\lambda\in\Lambda_{\vert\psi\rangle}$, in the ontic
support of that pure state. \end{definition}

\begin{theorem}
The quantum predictions are incompatible with $\psi$-epistemic models satisfying the assumptions of restricted ontic indifference and separability, and such that there exists a state $|\psi'\rangle\neq|\psi\rangle$ sharing a common ontic state $\lambda$ with $|\psi\rangle$, where $|\psi\rangle$ is the particular state in the definition of restricted ontic indifference.
\end{theorem}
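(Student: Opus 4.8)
The plan is to turn restricted ontic indifference into a statement of the same type used in Theorem~2 --- namely that a whole family of nearby states shares a common ontic state --- and then to run the separability-plus-amplification argument of Theorem~2 essentially verbatim. The one new ingredient is the Observation above: it lets me convert the single coincidence ``$|\psi\rangle$ and $|\psi'\rangle$ share $\lambda$'' into a coincidence for an entire orbit of states.

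First I would identify this orbit. Let $c=|\langle\psi|\psi'\rangle|$. For every unitary $U$ with $U|\psi\rangle=|\psi\rangle$ we have, by the Observation, $\lambda\in\Lambda_{U|\psi'\rangle}$, i.e.\ $P(\lambda|Q_{U|\psi'\rangle})>0$. As $U$ ranges over the stabiliser of $|\psi\rangle$, the state $U|\psi'\rangle$ ranges over \emph{all} states at overlap $c$ with $|\psi\rangle$: writing $|\psi'\rangle=c|\psi\rangle+\sqrt{1-c^2}\,|\chi\rangle$ with $|\chi\rangle\perp|\psi\rangle$, the stabiliser acts as the full unitary group on the orthogonal complement of $|\psi\rangle$ and hence sends $|\chi\rangle$ to an arbitrary unit vector there. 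Thus every state on the ``shell'' $\{|\phi\rangle:|\langle\psi|\phi\rangle|=c\}$ shares the single ontic state $\lambda$ with $|\psi\rangle$. This is the analogue of $\delta$-continuity, except that it holds on a shell rather than a ball, and crucially it is the \emph{same} $\lambda$ for every shell state.

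Next I would select on this shell $d\ge 3$ states $|\phi_1\rangle,\dots,|\phi_d\rangle$ that are equiangular with a tunable real overlap $m=[(d-2)/(d-1)]^{1/n}$. Writing $|\phi_k\rangle=c|\psi\rangle+\sqrt{1-c^2}\,|\chi_k\rangle$, this requires unit vectors $|\chi_k\rangle$ in the orthogonal complement of $|\psi\rangle$ with common real overlap $g=(m-c^2)/(1-c^2)$; for $n$ large one has $m\to 1$, so $g\in(0,1)$ and the Gram matrix $(1-g)I+gJ$ (with $J$ the all-ones matrix) is positive definite of rank $d$, realisable as soon as the ambient dimension is at least $d+1$ --- which I may assume by embedding, since embedding preserves $c$. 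All these states share $\lambda$. By separability the $n$-fold copies $|\phi_k^{\otimes n}\rangle$ then share the joint ontic state $\lambda^n=(\lambda,\dots,\lambda)$, so that $\epsilon_n\equiv\sum_{\vec{\lambda}}\min_k P(\vec{\lambda}|Q_k^n)\ge\min_k P(\lambda^n|Q_k^n)>0$. The inequality chain of Theorem~1, applied to the preparations $Q_k^n$, then yields $\sum_k P(k|M,Q_k^n)\ge\epsilon_n>0$ for \emph{every} measurement $M$.

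Finally I would derive the contradiction exactly as in Theorem~2: the $n$-fold copies are normalised with mutual overlap $m^n=(d-2)/(d-1)$, so they have the same Gram matrix as the states $|\psi_k\rangle=\frac{1}{\sqrt{d-1}}\sum_{j\neq k}|j\rangle$ of Theorem~1 and are therefore unitarily equivalent to them. Hence there is a $d$-outcome measurement $M$ with $P(k|M,Q_k^n)=0$ and $\sum_k P(k|M,Q_k^n)=0$, contradicting the bound just derived. I expect the main obstacle to be precisely that the overlap $c$ is now \emph{fixed} by the given pair $|\psi\rangle,|\psi'\rangle$ and cannot be tuned as the ball radius $\delta$ was in Theorem~2; the resolution is that the free parameter is instead the number of copies $n$, whose amplification lets me place the equiangular family at any overlap $m<1$ I like while staying on the fixed shell, at the mild cost of one extra Hilbert-space dimension. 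A secondary point to verify carefully is that the stabiliser orbit really exhausts the shell, since this is exactly what upgrades the single given coincidence into a coincidence for a whole continuum of states.
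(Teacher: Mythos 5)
Your proposal is correct in substance, but it reaches the contradiction by a genuinely different route than the paper. You convert restricted ontic indifference into the statement that the whole shell $\{|\phi\rangle:|\langle\psi|\phi\rangle|=c\}$ shares $\lambda$ (the paper does this too, via the same Observation), but you then keep the dimension $d\geq 3$ fixed, place equiangular states of tunable overlap $[(d-2)/(d-1)]^{1/n}$ on the shell, and run the full Theorem~2 machinery: separability, $n$-fold tensor powers, and unitary equivalence to the Theorem~1 states. The paper instead dispenses with amplification entirely: it chooses $d$ large enough that $c\leq\sqrt{(d-1)/d}$, pads the Hilbert space to dimension $d$ with an ancilla, writes $|\Psi\rangle$ as the uniform superposition, and places $d$ cyclically shifted states $|\Phi_k\rangle=\sum_i\alpha_{i-k}|i\rangle$ with $\alpha_1=0$ on the shell, so that each is orthogonal to one basis vector and the Theorem~1 argument applies directly to a \emph{single} copy. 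The paper's route is shorter and uses separability only once (for the ancilla); yours trades a larger single-system dimension for $n$ copies of a fixed-dimensional system. The one step you should make explicit is your ``embedding'' to ambient dimension $d+1$: you cannot simply enlarge the Hilbert space of a given physical system, so this must be realized by adjoining an ancilla prepared in a fixed state $|0\rangle$ and invoking separability to conclude that $|\psi\rangle|0\rangle$ and $|\psi'\rangle|0\rangle$ still share an ontic state (and that restricted ontic indifference is applied to the composite state $|\psi\rangle|0\rangle$) --- exactly the move the paper spells out; with that made precise, your argument goes through.
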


\begin{proof}
Let $d\geq 2$ be an integer such that $|\langle\psi'|\psi\rangle|\leq\sqrt{\frac{d-1}{d}}$. 
Add an ancilla $|0\rangle$ to the space such that the dimension of the full Hilbert
space is greater or equal to $d$. By the assumption of separability the states $|\Psi\rangle=|\psi\rangle|0\rangle$ and  $|\Psi'\rangle=|\psi'\rangle|0\rangle$ share a common ontic state $\lambda$. 
Choose a basis such that $\vert\Psi\rangle=\frac{1}{\sqrt{d}}\sum_{i=1}^{d}\vert i\rangle$,
and coefficients $\alpha_{i}\in\mathbb{R}$, $i=1,\ldots,d$ such
that $\alpha_{1}=0$, $\sum_{i=2}^{d}\alpha_{i}^{2}=1$, $\frac{1}{\sqrt{d}}\sum_{i=2}^{d}\alpha_{i}=|\langle\Psi'|\Psi\rangle|$
(this is always possible since $|\langle\Psi'|\Psi\rangle|\leq\sqrt{\frac{d-1}{d}}$).
Define the states $\vert\Phi_{k}\rangle=\sum_{i=1}^{d}\alpha_{i-k}\vert i\rangle$.
(Note that these states coincide with the states $\Psi_{k}$ used
in the main text if $\alpha_{i}=\frac{1}{\sqrt{d-1}}$, $i=2,\ldots,d$).
The Observation above implies that all the states $|\Phi_{k}\rangle$ share
the ontic state $\lambda$, since $|\langle\Phi_{k}|\Psi\rangle|=|\langle\Psi'|\Psi\rangle|$. We then obtain a contradiction with the predictions of quantum theory
exactly as in the proof of Theorem~1 given in the main text.
\end{proof}

\end{document}